\documentclass[ copyright]{eptcs}

\usepackage{fixltx2e} 

\usepackage[utf8]{inputenc}
\usepackage[T1]{fontenc}   

\usepackage{breakurl}        

\usepackage[cmex10]{amsmath}

\usepackage{array, tikz}
\usepackage{subfig,tabularx}
\usepackage{xspace}
\usepackage{cite}
\usepackage{epsfig}
\usepackage{multicol}
\usepackage{marvosym}
\usepackage[english]{babel}
\usepackage{amssymb}

\usepackage{amsfonts,stmaryrd,}
\usepackage{enumerate}
\usepackage{amsthm}

\usepackage{graphicx}

\usepgflibrary{shapes}

\newtheorem{definition}{Definition}

\newtheorem{proposition}{Proposition}

\renewcommand{\lambda}{\ensuremath{\pi}} 
\newcommand{\cg}{\ensuremath{\text{NATS}}\xspace} 
\newcommand{\cgp}{\ensuremath{\mathit{NATSs}}\xspace} 

\newcommand{\slc}{\ensuremath{\text{USL}}\xspace}
\newcommand{\usl}{\ensuremath{\text{USL}}\xspace}
\newcommand{\slnc}{\ensuremath{\text{SL}}\xspace}

\newcommand{\cando}[1]{\langle\negthinspace\langle #1
  \rangle\negthinspace\rangle}

\newcommand{\canndo}[1]{\llbracket #1 \rrbracket}

\renewcommand{\mu}{\ensuremath{\alpha}\xspace}

\newcommand{\stra}{\ensuremath{\mathit{Strat}}\xspace}

\newcommand{\de}{\ensuremath{\mathit{Ch}}\xspace}
\newcommand{\st}{s.t.\@\xspace}

\newcommand{\ut}{\ensuremath{\ \mathbf{U}\ }}

\renewcommand{\circ}{\ensuremath{\mathbf{X}\ }}
\newcommand{\eq}{\ensuremath{\mathbf{E}}\xspace}
\newcommand{\aq}{\ensuremath{\mathbf{A}}\xspace}

\newcommand{\track}{\ensuremath{\mathit{track}_{\mathcal{M}}}\xspace}
\newcommand{\last}{\ensuremath{\mathit{last}}}
\newcommand{\plan}{\ensuremath{\chi}\xspace}
\newcommand{\dom}{\ensuremath{\mathit{dom}}\xspace}
\newcommand{\out}{\ensuremath{\mathit{out}}\xspace}
\newcommand{\strat}{\ensuremath{\mathit{Strat}}\xspace}
\newcommand{\ct}{\ensuremath{\kappa}\xspace}

\newcommand{\valu}{\ensuremath{\mathit{v}}\xspace}
\newcommand{\atoms}{\ensuremath{\mathit{At}}\xspace}
\newcommand{\agent}{\ensuremath{\mathit{Ag}}\xspace}
\newcommand{\choic}{\ensuremath{\mathit{Ch}}\xspace}

\newcommand{\pspace}{\textbf{P{\small SPACE}}}

\newcommand{\bind}[2]{\ensuremath{ (#1\vartriangleright #2)}\xspace}
\newcommand{\debind}[2]{\ensuremath{ (#1\ntriangleright #2)}\xspace}
\newcommand{\debibi}[2]{\ensuremath{[#1\vartriangleright #2]}\xspace}

\newcommand{\p}{\ensuremath{\lambda}\xspace}
\newcommand{\tr}{\ensuremath{\tau}\xspace}

\author{Christophe Chareton \quad\quad Julien Brunel \quad\quad David
  Chemouil \institute{Onera -- The French Aerospace Lab \\F-31055
    Toulouse, France} \email{firstname.lastname@onera.fr} }

\def\titlerunning{Towards an Updatable Strategy Logic}

\begin{document}
\title{\titlerunning}
\maketitle

\begin{abstract}

This article is about temporal multi-agent logics.  Several of these
formalisms have been already presented (ATL-ATL*, ATL$_{\mathit{sc}}$,
SL). They enable to express the capabilities of agents in a system to
ensure the satisfaction of temporal properties. Particularly, SL and
ATL$_{\mathit{sc}}$ enable several agents to interact in a context
mixing the different strategies they play in a semantical game. We
generalize this possibility by proposing a new formalism, Updating
Strategy Logic (\slc). In \slc, an agent can also refine its own
strategy. The gain in expressive power rises the notion of
\emph{sustainable capabilities} for agents.

\slc is built from SL. It mainly brings to SL the two following
modifications: semantically, the successor of a given state is not
uniquely determined by the data of one choice from each
agent. Syntactically, we introduce in the language an operator, called
an \emph{unbinder}, which explicitly deletes the binding of a strategy
to an agent. We show that USL is strictly more expressive than SL. 
\end{abstract}


\section{Introduction}\label{intro}

Multi-agent logics are receiving growing interest in contemporary
research. Since the seminal work of Rajeev Alur, Thomas A. Henzinger,
and Orna Kupferman \cite{ahk}, one major and recent direction (ATL
with Strategy Context \cite{brihaye2009atl,atlscconcur, dacostalopes}, Strategy
Logic (presented first in \cite{krish2010} and then extended in \cite{vardi, varditr}) aims at contextualizing the statements
of capabilities of agents.

Basically, multi-agent logics enable assertions about the capability of
agents to ensure temporal properties. Thus, ATL-ATL$^{*}$ \cite{ahk} appears as a
generalization of CTL-CTL$^{*}$, in which the path quantifiers $\eq$
and $\aq$ are replaced by \emph{strategy quantifiers}. Strategy
quantifiers (the existential $\cando{A}$ and the universal
$\canndo{A}$) have a (coalition of) agent(s) as parameter.
$\cando{A}\varphi$ means that agents in $A$ can act so as to ensure
the satisfaction of temporal formula $\varphi$. It is interpreted in
\emph{Concurrent Game Structures} (CGS), where agents can make choices
influencing the execution in the system. Formula $\cando{A}\varphi$ is
true if agents in $A$ have a strategy so that if playing it they force
the execution to satisfy $\varphi$, whatever the other agents do.

A natural question is: how to interpret the imbrication of several
quantifiers? Precisely, in the interpretation of such formula as
\[
\psi_{1}:=\cando{a_{1}}\Box(\varphi_{1}\wedge\cando{a_{2}}\Box
\varphi_{2})
\]
 (where $\Box \varphi$ is the temporal operator meaning
\emph{$\varphi$} is always true, and $a_{1}$ and $a_{2}$ are agents),
is the evaluation of $\varphi_{2}$ made in a context that takes into
account both the strategy quantified in $\cando{a_{1}}$ and the
strategy quantified in $\cando{a_{2}}$?

In ATL-ATL$^{*}$, 
only $a_{2}$ is bound: subformula
$\cando{a_{2}}\Box\varphi_{2}$ is true iff $a_{2}$ may ensure
$\Box \varphi_{2}$, whatever the other agents do.  Then
$\cando{a_{2}}$ stands for three successive operations: First, each agent is unbound from its current strategy, then  an existential quantification is made for strategy $\sigma$.
At last, $a_{2}$ is bound to strategy $\sigma$.

ATL$_{\text{sc}}$~\cite{brihaye2009atl,atlscconcur, dacostalopes}, while keeping the ATL
syntax, adapts the semantics in order to interpret formulas in a context which
stores strategies introduced by earlier quantifiers. 

\emph{Strategy Logic} (\slnc~\cite{vardi, varditr}) is
another interesting proposition, which distinguishes between the
quantifications over strategies and their bindings to agents. The
operator $\cando{a}$ is split into two different operators: a
quantifier over strategies ($\cando{x}$, where $x$ is a strategy
variable) and a binder ($(a,x)$, where $a$ is an agent) that stores
into a context the information that $a$ plays along the strategy
instantiating variable $x$ (let us write it $\sigma_{x}$ in the
remaining of this paper).The ATL formula $\psi_{1}$ syntactically
matches the \slnc:
\[
\psi_{2}:=
\cando{x_{1}}(a_{1},x_{1})\Box(\varphi_{1}\wedge\cando{x_{2}}(a_{2},x_{2})\Box
\varphi_{2})
\]
 In $\psi_{2}$, when evaluating $\Box \varphi_{2}$,
$a_{1}$ remains bound to strategy $\sigma_{x_{1}}$ except if $a_{1}$ and
$a_{2}$ are the same agent. If they are the same, the binder $(a_{2},x_{2})$
unbinds $a$ from its current strategies before binding her to
$\sigma_{x_{2}}$.

In this paper we present \slc, a logic obtained from \slnc by making
explicit the unbinding of strategies and allowing new bindings without
previous unbinding. For that, we introduce an explicit unbinder
$\debind{a}{x}$ in the syntax (and the binder in \slc is written
\bind{a}{x}) and we interpret \slc in models where the choices of
agents are represented by the set of potential successors they enable from the current
state.  When there is no occurrence of an unbinder, each agent remains
bound to her current strategies. Then different strategies can combine
together even for a single agent, provided that they are
\emph{coherent}, which means they define choices in non-empty intersection
(the notion is formally defined in Sect.~\ref{sem}).

The main interest in such introduction is to distinguish between cases
where an agent composes strategies together and situations where she
revokes a current strategy for playing an other one.
If $a_{1}$ and $a_{2}$ are the same agents, then $\psi_{2}$ is written in \slnc:
\[
\psi_{3}:=
\cando{x_{1}}(a,x_{1})\Box(\varphi_{1}\wedge\cando{x_{2}}(a,x_{2})\Box
\varphi_{2}),
\]

 which syntactically matches the \slc:
\[
\psi_{4}:=\cando{x_{1}}\bind{a}{x_{1}}\Box(\varphi_{1}\wedge\cando{x_{2}}\bind{a}{x_{2}}\Box\varphi_{2})
\]

In $\psi_{3}$, subformula $\cando{x_{2}}(a,x_{2})\Box \varphi_{2}$
states that $a$ can adopt a new strategy that ensures $\Box
\varphi_{2}$, no matter if it is coherent with the strategy
$\sigma_{x_{1}}$ previously adopted. In $\psi_{4}$, both strategies
must combine coherently together. In natural language $\psi_{4}$
states that $a$ can ensure $\varphi_{1}$ and leave open the
possibility to ensure $\varphi_{2}$ in addition.  The equivalent of
$\psi_{3}$ in \slc is actually not $\psi_{4}$ but
\[
\psi_{5}:=\cando{x_{1}}\bind{a}{x_{1}}\Box(\varphi_{1}\wedge\cando{x_{2}}\debind{a}{x_{1}}\bind{a}{x_{2}}\Box\varphi_{2})
\]
 There indeed,
in subformula $\debind{a}{x_{1}}\bind{a}{x_{2}}\Box\varphi_{2}$, $a$
is first unbound from $\sigma_{x_{1}}$ and then bound to
$\sigma_{x_{2}}$.

A consequence of considering these compositions of
strategies is the expressiveness of \emph{sustainable capabilities} of
agents. Let us now consider the \slc formula:
\[
\psi_{6}:= \cando{x_{1}}\bind{a}{x_{1}}\Box(\cando{x_{2}}\debind{a}{x_{1}}\bind{a}{x_{2}}\circ p)
\]
 There the binder $\bind{a}{x_{2}}$ is used with the unbinder $\debind{a}{x_{1}}$, so that $\psi_{6}$ is equivalent to the \slnc: 
\[
\psi_{7}:= \cando{x_{1}}(a,x_{1})\Box(\cando{x_{2}}(a, x_{2})\circ p)
\]
 It states that $a$ can remain capable to perform the condition expressed
 by $\circ p$ when she wants. But in case she actually performs it,
 the formula satisfaction does not require that she is still capable to
 perform it. The statement holds in state $s_{0}$ in structure
 $\mathcal{M}_{1}$ with single agent $a$. See Fig.\ref{red1}, where choices
 are defined by the set of transitions they enable. Since
 $\mathcal{M}_{1}$ interprets SL formulas with only agent $a$, the
 choices for $a$ are deterministic: let $s,s'$ be two states and $c$ a
 choice, then the transition from $s$ to $s'$ is labelled with $c$ iff
 $\{s'\}$ is a choice for $a$ at $s$.  Indeed, by always playing
 choice $c_{1}$, $a$ remains in state $s_{0}$, 
 where she can change her mind to ensure $p$. But if she chooses
 to reach $p$, she can do it only  by moving to state $s_{1}$ and then to state
 $s_{2}$. Doing so, she loses her capability to ensure $\circ p$ at any
 time.  The only way for her to maintain her capability to reach $p$ is
 to always avoid it, her capability is not sustainable.
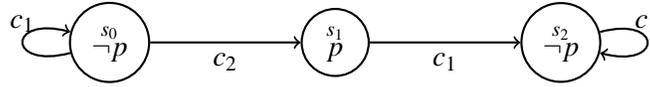
\begin{figure}[!h]\centering
\begin{tikzpicture}
\tikzstyle{st}=[circle, 
thick, draw];
\node[st](1) at(0,0){$\stackrel{s_{0}}{\neg p}$};
\node[st](2) at(3,0){$\stackrel{s_{1}}{p}$};
\node[st](3) at(6,0){$\stackrel{s_{2}}{\neg p}$};
\draw[thick, ->](1)to node[below]{$c_{2}$}(2);
\draw[thick, ->](2)to node[below]{$c_{1}$}(3);
\draw[loop left,thick, ->](1)to node[above]{$c_{1}$}(1);
\draw[loop right,thick, ->](3)to node[above]{$c_{1}$}(1);
\end{tikzpicture}
\caption{Structure $\mathcal{M}_{1}$\label{red1}}
\end{figure}

A more game theoretical view is to consider strategies as
commitments. In such view, by adopting a strategy, $a$ adopts a
behavior that holds in the following execution, as far as it is not
explicitly deleted. Formula
\[
\psi_{8}:= \label{f6}\cando{x_{1}}\bind{a}{x_{1}}\Box(\cando{x_{2}}\bind{a}{x_{2}}\circ
p)
\]
 is the counterpart of formula $\psi_{7}$ with such interpretation
of composing strategies for a single agent.  If $a$ plays
$\sigma_{x_{2}}$, it must be coherently with $\sigma_{x_{1}}$. Thus,
$\psi_{8}$ is false in structure $\mathcal{M}_{1}$, since $a$ cannot
achieve $p$ more than once. 

Formula $\psi_{8}$ distinguishes between structures $\mathcal{M}_{1}$
and $\mathcal{M}_{2}$ from Fig.\ref{red2} ( Note that in this second
structure the choices are not deterministic: from a given state a
choice may be compatible with several potential successors). In
$\mathcal{M}_{2}$, $\psi_{8}$ is true at $s_{0}$ since the strategy
\emph{always play $c_{1}$} ensure the execution to remain in state
$s_{0}$ or $s_{1}$ and is always coherent with strategy \emph{play
  $c_{2}$ first and then always play $c_{1}$}, which ensures $\circ p$
from states $s_{0}$ and $s_{1}$. What is at stake with it is the
difference between \emph{sustainable capabilities} and \emph{one shot
  capabilities}. Formulas $\psi_{7}$ and $\psi_{8}$ both formalize the
natural language proposition \emph{$a$ can always achieve $p$}. One
shot capability ($\psi_{7}$) means she can achieve it once for all and
choose when. Sustainable capability ($\psi_{8}$) means she can achieve
it and choose when without affecting nor losing this capability for
the future.

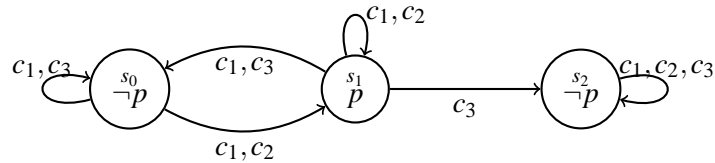
\begin{figure}[!h]\centering
\begin{tikzpicture}
\tikzstyle{st}=[circle, 
thick, draw];
\node[st](1) at(0,0){$\stackrel{s_{0}}{\neg p}$};
\node[st](2) at(3,0){$\stackrel{s_{1}}{p}$};
\node[st](3) at(6,0){$\stackrel{s_{2}}{\neg p}$};
\draw[thick, ->, bend right](1)to node[below]{$c_{1}, c_{2}$}(2);
\draw[thick, ->, bend right](2)to node[below]{$c_{1}, c_{3}$}(1);

\draw[thick, ->](2)to node[below]{$c_{3}$}(3);
\draw[loop left,thick, ->](1)to node[above]{$c_{1}, c_{3}$}(1);
\draw[loop above,thick, ->](2)to node[right]{$c_{1}, c_{2}$}(2);
\draw[loop right,thick, ->](3)to node[above]{$c_{1}, c_{2}, c_{3}$}(1);
\end{tikzpicture}
\caption{Structure $\mathcal{M}_{2}$ \label{red2}}
\end{figure}

In Sect.\ref{epower}, we compare the expressive power of \slnc and \slc by use of formula $\psi_{9}$, obtained from $\psi_{7}$ by adding to $a$ the sustainable capability to ensure $\circ \neg p$:
\[\psi_{9}:= \cando{x}\bind{a}{x}\Box(\cando{x_{0}}\bind{a}{x_{0}}\circ p
\wedge\cando{x_{0}}\bind{a}{x_{0}}\circ \neg p)\] $\psi_{9}$ states
that $a$ has sustainable capability to decide whether $p$ or $\neg p$
holds at next state. We say that $a$ has \emph{sustainable} control on
property $p$: she is sustainably capable to decide the truth value of $p$.

The main purposes of \slc are  to give a formalism for
the composition of strategies and  to unify it with
the classical branching-time mechanisms of strategy revocation. So, 
 both treatments can be combined in a single formalism. 
In the remaining of this paper we define \slc  syntax and semantics, and we introduce the comparison of its expressive power with that of \slnc.


\section{Syntax and semantics}\label{sem}
In this section we present the syntax and semantics of \slc, together
with the related definitions they require. The \slc formulas distinguish
between \emph{path} and \emph{state} formulas.

\begin{definition}  Let 
$\agent$ be a set of agents,
$\atoms$ a set of propositions and $X$ a set of
variables, \slc($\agent,\atoms,X$) is given by the following grammar:
\begin{itemize}
\item State formulas: 
$ \varphi::= p \mid \neg\varphi\mid
\varphi\wedge\varphi\mid\cando{x}\varphi\mid \bind{A}{x}\psi\mid
\debind{A}{x}\psi$
 \item Path formulas:  
$\psi ::=\varphi\mid \neg \psi\mid\psi\wedge\psi\mid\psi\ut\psi\mid\circ\psi$
\end{itemize}
where $p\in\atoms, A\subseteq \agent, x\in X$. 
\end{definition}
These formulas hold a notion of \emph{free} variable that is similar
to that in \cite{vardi, varditr}: an atom has an empty set of free variables, a
binder adds a free variable to the set of free variables of its direct
subformula and a quantifier deletes it.  Upon formulas on this
grammar, those that can be evaluated with no context are the
\emph{sentences}. They are formulas with empty set of free variables,
which means each of their bound variables is previously quantified.
We now come to the definitions for \slc  semantics.
\begin{definition} \label{cgsc}
A Non-deterministic Alternating Transition System (\cg) is a tuple \linebreak
$\mathcal{M} = \langle \agent, M, \atoms, \valu, \de \rangle$ where:
\begin{itemize}
\item $M$ is a set of states, called the domain of the \cg, $\atoms$ is
  the set of atomic propositions and $\valu$ is a
  valuation function, from $M$ to $\mathcal{P}(\atoms)$.
\item \de : $ \agent\times M \to \mathcal{P}(\mathcal{P}(M))$ is a
  choice function mapping a pair $( \mathit{agent,
    state})$ to a non-empty family of choices of possible next
  states. It is such that for every state $s\in M$ and for every agents $a_{1}$ and $a_{2}$ in $\agent$, for every $c_{1}\in\choic(a_{1},s)$ and $c_{2}\in\choic(a_{2},s), c_{1}\cap c_{2}\neq \emptyset$.
\end{itemize}
\end{definition}

We call a finite sequence of states in $M$ a \emph{track} $\tr$. The
last element of a track $\tau$ is denoted by \last$(\tau)$. 
The set
of tracks that are possible in $\mathcal{M}$ is denoted by $\track:
\tau = s_{0}s_{1}\dots s_k \in \track$ iff for every $i<k$, for every $a\in \agent$, there is 
$c_{a}\in \mathcal{P}(M)$ \st $c_{a}\in \choic(a,s_{i})$ and $s_{i+1}\in
c_{a}$.  Similarly, an infinite sequence of states such that all its prefixes are in $\track$ is called a
\emph{path} (in $\mathcal{M}$).

\begin{definition}[Strategies and coherence]
 A \emph{strategy} is a function $\sigma$ from $\agent\times \track$
 to $\mathcal{P}(M)$ such that for all $(a, \tr)\in \agent\times
 \track, \sigma(a,\tr)\in \de(a, \last(\tr))$.  By
 extension, we write $\sigma(A, \tau)$ for $\bigcap_{a\in A} \sigma(a,
 \tau)$ for every $A\subseteq\agent$. Two strategies $\sigma_{1}$ and
 $\sigma_{2}$ are \emph{coherent} iff for all $(a, \tau)$ in
 $\agent\times \track, \sigma_{1}(a, \tau) \cap \sigma_{2}(a,
 \tau)\neq\emptyset$. In this case, we also say that $\sigma_{1}(a, \tau)$ and
 $\sigma_{2}(a, \tau)$ are \emph{coherent choices}.
\end{definition}

 A \emph{commitment} \ct is a finite sequence upon
 $(\mathcal{P}(\agent)\times X)$, representing the active bindings.
 An \emph{assignment} $\mu$ is a partial function from $X$ to
 \strat. A \emph{context} \plan is a pair of an assignment and a
 commitment.  Note that an agent can appear several times in a
 commitment. Furthermore commitments store the order in which pairs
 $(A,x)$ are introduced. Therefore our notion of contexts differs from
 the notion of \emph{assignments} that is used in \slnc~\cite{vardi,
   varditr}.

 A context defines a function from $\track$ to $\mathcal{P}(M)$. We
 use the same notation for the context itself and its induced
 function. Let $\ct_{\emptyset}$ be the empty sequence upon
 $(\mathcal{P}(\agent)\times X)$, then:
\begin{itemize}
\item $(\mu,\ct_{\emptyset})(\tr) = M$
\item $(\mu,(A,x))(\tr)=$ 
\begin{itemize}
\item $\bigcap_{a\in A}\mu(x)(a,\tr)$ if $A \neq
 \emptyset$
\item else $M$
\end{itemize}
\item $(\mu,\ct\cdot(A,x))(\tr) =$
\begin{itemize}
\item $(\mu,\ct)(\tr)\cap(\mu,(A,x))(\tr)$ if this intersection is not
 empty.
\item  otherwise (which means the context induces contradictory choices),
 $ (\mu,\ct)(\tr)$ . 
\end{itemize}
\end{itemize}
Now we can define the
 outcomes of a context \plan, \out$(\plan)$: let $\p = \p_{0}, \p_{1},
 \dots$ be an infinite sequence over $M$, then $\p\in \out(s, \plan)$
 iff $\p$ is a path in $\mathcal{M}$, $s = \p_{0}$ and for every $n
 \in \mathbb{N}$, $\p_{n+1}\in\plan(\p_{0} \dots \p_{n})$.

\begin{definition}[Strategy and assignment translation] \label{transla}Let
  $\sigma$ be a strategy and $\tr$ be a track. Then $\sigma^{\tr}$ is
  the strategy \st for every $\tr'\in \track$ , $\sigma^{\tr}(\tr') =
  \sigma(\tr\tr')$. The notion is extended to an assignment: for every
  $\mu, \mu^{\tr}$ is the assignment with domain equal to that of $\mu$
  and \st for every $x\in \dom(\mu), \mu^{\tr}(x)=(\mu(x))^{\tr}$
\end{definition}

We also define the following transformations of commitments and
assignments. Given a commitment $\ct$, coalitions $A$ and $B$, a strategy
variable $x$, an assignment $\mu$ and a strategy $\sigma$:
\begin{itemize}
\item $\ct[A\rightarrow x] = \ct\cdot\bind{A}{x}$
\item $((B,x)\cdot\ct)[A\nrightarrow x] = (B\backslash A,
x)\cdot(\ct[A\nrightarrow x])$ and $\ct_{\emptyset}[A\nrightarrow x]= \ct_{\emptyset}$
\item $ \mu[x\to\sigma]$ is
the assignment with domain $\dom(\mu)\cup\{x\}$ \st $\forall y \in
\dom(\mu)\backslash\{x\}, \mu[x\to\sigma](y)= \mu(y)$ and
$\mu[x\to\sigma](x)=\sigma$
\end{itemize}

\begin{definition}[Satisfaction relation] 
Let $\mathcal{M}$ be a \cg, then for every assignment \mu, commitment \ct,
state $s$ and path $\lambda$:
\begin{itemize}
\item State formulas:
\begin{itemize}
\item $\mathcal{M}, \mu,\ct, s\models p$ iff $p\in \valu(s)$, with
  $p \in \atoms$
\item $\mathcal{M}, \mu, \ct, s\models \neg \varphi$ iff it is
  not true that $\mathcal{M}, \mu,\ct, s\models\varphi$
\item $\mathcal{M}, \mu,\ct, s\models \varphi_{1}\wedge
  \varphi_{2}$ iff $\mathcal{M}, \mu,\ct, s\models \varphi_{1}$
  and $\mathcal{M}, \mu,\ct, s\models \varphi_{2}$
\item $\mathcal{M}, \mu,\ct,s\models \cando{x}\varphi$ iff
  there is a strategy $\sigma\in \stra$ \st $\mathcal{M}, \mu[x\to
    \sigma],\ct, s\models\varphi$
\item $\mathcal{M},\mu,\ct , s\models \bind{A}{x}\varphi$ iff  for every
  $\lambda$ in $\out(\mu, \ct[A \rightarrow x]), \mathcal{M}, \mu,
  \ct[A \rightarrow x], \lambda\models\varphi$
\item $\mathcal{M},\mu,\ct , s\models \debind{A}{x}\varphi$ iff for
  all $\lambda$ in $\out(\mu, \ct[A \nrightarrow x]), \mathcal{M},
  \mu, \ct[A \nrightarrow x], \lambda\models\varphi$
\end{itemize}
\item Path formulas :
\begin{itemize}
\item $\mathcal{M},\mu,\ct , \lambda\models\varphi$ iff
  $\mathcal{M},\mu,\ct , \lambda_{0}\models\varphi$, for every state
  formula $\varphi$
\item $\mathcal{M}, \mu, \ct, \lambda\models \neg \psi$ iff it is
  not true that $\mathcal{M}, \mu,\ct, \lambda\models\psi$
\item $\mathcal{M}, \mu,\ct, \lambda\models \psi_{1}\wedge
  \psi_{2}$ iff $\mathcal{M}, \mu,\ct, \lambda\models \psi_{1}$
  and $\mathcal{M}, \mu,\ct, \lambda\models \psi_{2}$
\item $\mathcal{M}, \mu,\ct,\lambda\models \circ\psi$ iff 
  $\mathcal{M}, \mu^{\lambda_{0}},\ct,\lambda^1\models
  \psi$.
\item $\mathcal{M}, \mu,\ct,\lambda\models
  \psi_{1}\ut\psi_{2}$ iff there is $i\in \mathbb{N}$ \st
  $\mathcal{M}, \mu^{\lambda_{0}\dots\lambda_{i-1}},\ct,\lambda^{i}\models
  \psi_{2}$ and for every $0\leq j< i, \mathcal{M},
  \mu^{\lambda_{0}\dots\lambda_{j-1}},\ct,\lambda^{j}\models \psi_{1}$
\end{itemize}
\end{itemize}
Let $\mu_{\emptyset}$ be the unique assignment with empty domain. Let
$\varphi$ be a sentence in \slc$(\agent,\atoms,X)$. Then $\mathcal{M}, s
\models \varphi$ iff $\mathcal{M}, \mu_{\emptyset},\ct_{\emptyset}
\models \varphi$.
\end{definition}

Let us give the following comment over these definitions: for every
context \plan = $(\mu, \ct)$, the definition of \out(\plan) ensures that
the different binders encoded in \plan compose their choices together,
\emph{as far as possible}. In case two contradictory choices from an
agent are encoded in the context, the priority is given to the first
binding that was introduced in this context (the left most binding in the
formula). This guarantees that a formula requiring the composition of
two contradictory strategies is false. For example, suppose that
$\cando{x_{1}}\bind{a}{x_{1}}\varphi_{1}$ and
$\cando{x_{2}}\bind{a}{x_{2}}\varphi_{2}$ are both true in a state of
a model, and suppose that strategies  $\sigma_{x_1}$ and $\sigma_{x_2}$
necessarily rely on contradictory choices of $a$ (this means that $a$
cannot play in a way that ensures both $\varphi_{1}$ and
$\varphi_{2}$). Then,
$\cando{x_{1}}\bind{a}{x_{1}}(\varphi_{1}\wedge\cando{x_{2}}\bind{a}{x_{2}}\varphi_{2})$
is false in the same state of the same model. If the priority was
given to the most recent binding (right most binding in the formula),
the strategy $\sigma_{x_1}$ would be revoked and the formula would be
satisfied.

\section{Comparison with \slnc~\cite{vardi, varditr}}\label{epower}
\slnc syntax can be basically described from SL by deleting the use of the
unbinder. Furthermore, the binders are limited to sole agents and are
written $(a,x)$ instead of $\bind{a}{x}$.  \slc appears to be more
expressive than \slnc \cite{vardi, varditr}. More precisely, \slnc can
be embedded in \slc, while $\psi_{9}$ is not expressible in \slnc,
even by extending its semantics to non-deterministic models.  Here we
give the three related propositions. By lack of space, the proofs are
only sketched in this article. Detailed proofs of these propositions
can be found in \cite{tr2}.  Note that, since \slnc is strictly more
expressive than ATL$_{\text{sc}}$ \cite{dacostalopes}, the following
results also hold for comparing \slc with ATL$_{\text{sc}}$.
\begin{proposition} 
\label{sec:comparison-with-slnc}
There is an embedding of \slnc into \slc. 
\end{proposition}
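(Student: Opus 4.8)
The plan is to exhibit a truth-preserving translation $\trad{\cdot}$ of \slnc-formulas into \slc-formulas together with a translation $\mathcal{N}(\cdot)$ of the models of \slnc (concurrent game structures) into \cgp, and to verify correctness by induction on formulas. The formula translation is almost homomorphic: $\trad{\cdot}$ commutes with the Boolean connectives, with $\cando{x}$, and with the path operators $\circ$ and $\ut$; the only clause that needs thought is the binder. Since in \slnc a binder $(a,x)$ first revokes whatever strategy is currently attached to $a$ and only then binds $a$ to $\sigma_x$, I would --- after $\alpha$-renaming the sentence so that distinct quantifiers use distinct variables --- set $\trad{(a,x)\psi} := \debind{a}{x_1}\cdots\debind{a}{x_k}\bind{a}{x}\trad{\psi}$, where $x_1,\dots,x_k$ enumerate the variables occurring in the sentence other than $x$. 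Unbinding $a$ from a variable to which it is not currently bound is a no-op on contexts, so this prefix simply clears $a$ and the subsequent $\bind{a}{x}$ re-attaches it, faithfully mimicking the \slnc binder. (The $\alpha$-renaming matters: it guarantees that each entry of a \slc-commitment refers to a strategy whose value never changes afterwards.)

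The delicate ingredient is $\mathcal{N}(\cdot)$. The obvious guess --- give agent $a$, at state $s$, one choice $\{\,\delta_{\mathcal{C}}(s,\vec{m}) \mid \vec{m}|_a = m_a\,\}$ per move $m_a$ --- does \emph{not} work: the intersection of one such choice per agent is in general a strict superset of the unique \slnc-successor, so $\out$ would be too permissive and universally quantified path subformulas would fail to transfer. Instead I would record the last move profile in the state. The states of $\mathcal{N}(\mathcal{C})$ are the pairs $(s,\vec{m})$ with $s$ a state of $\mathcal{C}$ and $\vec{m}$ a move profile at some state, plus initial pairs $(s,\bot)$; the valuation of $(s,\vec{m})$ is inherited from $s$; and from $(s,\vec{u})$ agent $a$'s choice for the move $m_a$ is $\{\,(\delta_{\mathcal{C}}(s,\vec{m}),\vec{m}) \mid \vec{m}|_a = m_a\,\}$. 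Any two choices of distinct agents meet (complete the profile), so $\mathcal{N}(\mathcal{C})$ is a genuine \cg; and the intersection of one choice per agent is the single state selected by the chosen profile, so the outcome of any context in which all of \agent is bound projects on first components onto the \slnc-play of the corresponding strategy profile --- and when only part of \agent is bound, onto exactly the set of \slnc-plays consistent with those strategies. (Equivalently one could split every transition into two steps, keeping the profile in an intermediate state marked by a fresh atom, at the cost of replacing $\circ$ by $\circ\circ$ and relativising $\ut$ to that atom.)

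The core is then the lemma: for every CGS $\mathcal{C}$, every \slnc-formula $\varphi$, every state $s$, and every suitable \slnc-assignment $\chi$, one has $\mathcal{C},\chi,s\models\varphi$ iff $\mathcal{N}(\mathcal{C}),\Phi(\chi),(s,\vec{m})\models\trad{\varphi}$, where $\Phi(\chi)$ is the \slc-context pairing the variable-to-strategy part of $\chi$ (each strategy lifted to \cg-tracks through the first-component projection, which is all that the profile-consistent tracks require) with the commitment recording the \slnc bindings currently in force, one entry $(\{a\},x_a)$ per bound agent. The induction is routine except at three points. For $\cando{x}$, one matches \slnc-strategies with \cg-strategies and observes that the additional \cg-tracks that are not profile-consistent are visited by no outcome, so the extra quantificational freedom is inert. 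For $\circ$ and $\ut$, one uses the projection established above, together with the facts that a translated commitment has at most one entry per agent and so never triggers the ``contradictory choices'' fallback clause in the definition of $\out$, and that the shift $\sigma^{\tau}$ of \slc matches ``continuing the play'' in \slnc. For the binder, one uses the no-op property of the unbinder prefix, which keeps $\Phi$ in step with the \slnc re-binding.

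I expect the main obstacle to be precisely the design of $\mathcal{N}$: recognising that an \slnc concurrent game structure cannot, in general, be presented as a \cg on the same state space --- the per-agent ``sets of possible successors'' forget the correlation between the agents' moves --- and that recording the move profile (equivalently, splitting each step in two) is what restores the \slnc transition as a singleton intersection. The remaining work is bookkeeping: the strategy correspondence for $\cando{x}$ (formally discarding the non-profile-consistent tracks) and the alignment of the path-formula shift, neither of which is conceptually hard once $\mathcal{N}$ is fixed.
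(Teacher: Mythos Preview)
Your formula translation via the unbinder prefix is exactly the device the paper uses (its abbreviation $\debibi{a}{x}$), and your observation that the naive per-move choice sets lose the correlation between agents' moves is correct. But there is a genuine gap in your treatment of the strategy quantifier, and it is precisely the point the paper's sketch singles out as requiring separate work.

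In \slnc a strategy is a map from tracks to \emph{actions}, uniform across agents: binding two agents to the \emph{same} variable forces them to play the same move. In \slc a strategy is a map from $\agent\times\track$ to choices, so a single \slc-strategy may prescribe different moves to different agents. Hence your correspondence ``match \slnc-strategies with \cg-strategies'' is not onto-up-to-irrelevance: the \slc side has strictly more witnesses, and this extra freedom is \emph{not} inert when a variable is shared. Concretely, take a two-agent CGS with moves $\{0,1\}$ at $s$ and $\delta(s,(0,0))=\delta(s,(1,1))=s_{\neg p}$, $\delta(s,(0,1))=\delta(s,(1,0))=s_{p}$. The \slnc sentence $\cando{x}(a,x)(b,x)\circ p$ is false at $s$, since any $f$ yields a diagonal profile. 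In your $\mathcal{N}(\mathcal{C})$ the \slc-strategy $\sigma$ with $\sigma(a,\cdot)$ the choice for move $0$ and $\sigma(b,\cdot)$ the choice for move $1$ makes the translated sentence $\cando{x}\bind{a}{x}\bind{b}{x}\circ p$ true, because the intersection of these two choices is the singleton $\{(s_{p},(0,1))\}$. So your equivalence fails already at a depth-one quantifier.

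The paper handles this not by your state-blow-up but by an \emph{internal} transformation of \slnc that expresses the ``same action'' constraint carried by shared variables in the syntax itself; only after that preprocessing does the unbinder-prefix translation become sound. Your profile-recording $\mathcal{N}(\cdot)$ is a sensible way to make per-agent intersections deterministic, but it does nothing to enforce the \slnc discipline that one variable denotes one action; you would still need an additional syntactic step (of the kind the paper alludes to) before the inductive lemma goes through.
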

\begin{proof}[Proof (Sketch)]
The embedding consists in a parallel transformation from \slnc models
and formulas to that of \slc. The transformation preserves the
satisfaction relation.  The differences between \slnc and \slc lie
both in the definition of strategies in \slnc semantics and the
difference of interpretation for the binding operator. The first is
treated by defining an internal transformation for \slnc. By this
transformation, the constraints of agents playing the same choices,
issued from SL actions framework, are expressed in the syntax. Then we
define a new operator in \slc that is equivalent to \slnc binding, and
show the equivalence: the operator \debibi{a}{x} is an abbreviation
for a binder \bind{a}{x} preceded by the set of unbinders
$\debind{a}{x_{i}}$, one for every variable $x_{i}$ in the language.
\end{proof}
\begin{proposition}
\label{sec:comparison-with-slnc-1}A model is said \emph{deterministic} if the successor of a state is uniquely determined by  one choice for every agent. Then, sustainable control is not expressible  over deterministic models, neither in \slnc nor in \slc.
\end{proposition}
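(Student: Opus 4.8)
The plan is to show that, over deterministic models, the "sustainable" nesting of capability operators collapses to a single, non-nested capability statement, so that the distinction between one-shot and sustainable control disappears. Concretely, I would first observe that in a deterministic model the choice function $\de$ satisfies: for each state $s$ and each tuple of choices $(c_a)_{a\in\agent}$ with $c_a\in\choic(a,s)$, the intersection $\bigcap_{a\in\agent}c_a$ is a singleton. Consequently a strategy $\sigma$ together with the commitment it induces picks out, from any track, a unique successor; so the outcome $\out(s,(\mu,\ct))$ is a single path whenever the commitment mentions all agents, and more generally the context function behaves like the classical CGS strategy semantics. This lets me reduce \slc over deterministic models to ordinary \slnc (equivalently, \slc without unbinders and with the collapse $\bind{A}{x}\equiv\debibi{A}{x}$, using Proposition~\ref{sec:comparison-with-slnc}), since when successors are uniquely determined no "coherent composition" of two strategies for the same agent is possible beyond the two strategies simply agreeing — so a later binding of an agent effectively overwrites the earlier one exactly as in \slnc.

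The second step is to pin down what "sustainable control" means as a formula, namely $\psi_{9}$, and to argue it is equivalent over deterministic models to its one-shot counterpart $\cando{x}\bind{a}{x}\Box(\cando{x_{0}}\bind{a}{x_{0}}\circ p\wedge \cando{x_{0}}\bind{a}{x_{0}}\circ\neg p)$ read with \slnc-style revoking semantics, i.e.\ to a formula of the shape "there is a strategy $\sigma$ for $a$ such that along its (unique) outcome, at every position $a$ can force $\circ p$ and can force $\circ\neg p$". But on a deterministic model with agent set containing $a$, "at position $\p_n$, $a$ can force $\circ p$" just says some $c\in\choic(a,\p_n)$ leads (after intersecting with the other agents' choices) only to $p$-states; this is a state property $\theta_p$, independent of the committed strategy. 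So the whole formula reduces to "there is a path from $s$ along which $\theta_p\wedge\theta_{\neg p}$ holds forever" — an ATL/\slnc-expressible property with no genuine nesting of contexts. I would then exhibit two deterministic models agreeing on all such flat path-quantified properties but which a truly sustainable logic would distinguish, OR, more directly, show that the two readings $\psi_{7}$ and $\psi_{8}$ (one-shot vs.\ sustainable) coincide on every deterministic model, which is exactly the failure of expressibility of the distinction: sustainable control is, over deterministic models, not a separate notion at all.

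The cleanest route for the final claim is the latter: define sustainable control expressibility as the existence of a formula true exactly at states enjoying sustainable control and false at states enjoying only one-shot control, then show no such formula exists by producing, for any candidate, a deterministic model where the two coincide — indeed by the collapse above, on deterministic models $\psi_{8}$ and $\psi_{7}$ have the same extension, so $\mathcal{M}_{1}$ (suitably rendered deterministic, which it already is) already witnesses that the intended separating instance fails. Since this holds uniformly, and the reduction to \slnc semantics is faithful on deterministic models, neither \slc nor \slnc can separate the two notions there.

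I expect the main obstacle to be the faithfulness of the reduction in the presence of unbinders and multi-agent coalitional binders $\bind{A}{x}$: I must check that on deterministic models the context-function clauses (in particular the "priority to the earliest binding on contradiction" rule) genuinely degenerate so that unbinding $a$ and re-binding it is observationally equivalent to \slnc's implicit revocation, and that $\debibi{A}{x}$ from Proposition~\ref{sec:comparison-with-slnc} coincides with $\bind{A}{x}$ under determinism. Handling coalitions $A$ with $|A|>1$ requires care, since determinism constrains only full tuples of choices; I would argue that a coalition's committed choice, intersected with the remaining agents' choices, still collapses appropriately because any completion to a full tuple yields a singleton. Once that degeneration lemma is established, the rest is the routine ATL-style argument that nested "can force" statements over deterministic structures flatten into state predicates.
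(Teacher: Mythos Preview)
Your plan is far more elaborate than what is needed, and its central step is wrong. The paper's argument is a single observation: for every deterministic \cg\ $\mathcal{M}$ and every state $s$, one has $\mathcal{M},s\nvDash\psi_{9}$. The reason is immediate from the context function. With $a$ already bound to $\sigma_{x}$, an additional binding $\bind{a}{x_{0}}$ returns, at track $\tau$, either $\sigma_{x}(a,\tau)\cap\sigma_{x_{0}}(a,\tau)$ (if nonempty) or $\sigma_{x}(a,\tau)$ (by the priority clause). In the single-agent deterministic case every choice is a singleton, so in both cases the value is $\sigma_{x}(a,\tau)$: the refinement cannot move the successor. Hence $\cando{x_{0}}\bind{a}{x_{0}}\circ p$ reduces to ``the $\sigma_{x}$-successor satisfies $p$'', and the inner conjunction of $\psi_{9}$ is a contradiction at every position. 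Proposition~\ref{sec:comparison-with-slnc} then carries the conclusion over to \slnc. No reduction of \slc to \slnc on deterministic models, no flattening lemma, and no two-model separation argument is required.

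Your route breaks precisely at the claim that under determinism ``a later binding of an agent effectively overwrites the earlier one exactly as in \slnc'', from which you conclude that $\psi_{7}$ and $\psi_{8}$ have the same extension on deterministic models. The priority clause goes the \emph{other} way: when the intersection is empty the \emph{earlier} binding is retained and the later one is discarded. Thus on a deterministic model the inner $\bind{a}{x_{2}}$ in $\psi_{8}$ is inert rather than revoking, and $\psi_{8}$ collapses to $\cando{x_{1}}\bind{a}{x_{1}}\Box\circ p$, which is \emph{not} $\psi_{7}$. The paper's own $\mathcal{M}_{1}$ is deterministic, satisfies $\psi_{7}$ at $s_{0}$, and falsifies $\psi_{8}$ there. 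This same reversal also kills your anticipated ``degeneration lemma'' that $\debibi{A}{x}$ and $\bind{A}{x}$ coincide under determinism: they do not, and that is exactly what makes $\psi_{9}$ unsatisfiable rather than equivalent to its \slnc counterpart.
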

\begin{proof}[Proof (Sketch)]
One checks that for every deterministic \cg $\mathcal{M}$, for any state $s$ of $\mathcal{M}$, $\mathcal{M}, s\nvDash \psi_{9}$. Proposition~\ref{sec:comparison-with-slnc} then straightly brings proposition~\ref{sec:comparison-with-slnc-1}
\end{proof}
\begin{proposition}
Sustainable control is not expressible  in \slnc interpreted over \cgp.
\end{proposition}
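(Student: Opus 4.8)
The plan is to argue by contradiction: suppose some \slnc sentence $\varphi$ is equivalent to $\psi_9$ over the whole class \cgp. We exhibit two \cgp, call them $\mathcal{N}$ and $\mathcal{N}'$, together with distinguished states $s$ and $s'$, such that (i) $\mathcal{N}, s\models\psi_9$ while $\mathcal{N}', s'\nvDash\psi_9$, and yet (ii) $\mathcal{N}, s$ and $\mathcal{N}', s'$ satisfy exactly the same \slnc sentences. This suffices: $\varphi$ holds at $s$ because $\psi_9$ does and $\varphi\equiv\psi_9$; by (ii), $\varphi$ holds at $s'$; hence $\psi_9$ holds at $s'$, contradicting (i). So no such $\varphi$ exists.

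For (i), the structure $\mathcal{N}$ is built in the spirit of $\mathcal{M}_2$ of Fig.~\ref{red2}: one ``hub'' choice of agent $a$ keeps the play inside a region of states from each of which $a$ still has a choice --- coherent with the hub choice --- all of whose successors satisfy $p$, and similarly a choice all of whose successors satisfy $\neg p$; verifying $\mathcal{N}, s\models\psi_9$ is then a direct unfolding argument, just as for $\psi_8$ in the Introduction. Because the coherence condition of Def.~\ref{cgsc} forbids an agent from having two disjoint choices, producing the required non-determinism forces the use of at least one extra, passive agent playing the role of an environment. The structure $\mathcal{N}'$ is the ``strategy-revocation'' counterpart: the same states and valuation, but with the choices rearranged so that an all-$p$ choice and an all-$\neg p$ choice remain available from every reachable state, while no single coherent strategy of $a$ can be sustained alongside both of them --- so sustainable control of $p$ is lost and $\mathcal{N}', s'\nvDash\psi_9$.

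The heart of the proof is (ii). The key observation is that, unlike a \slc binder, the \slnc binder $(a,x)$ first unbinds $a$ from every strategy already recorded in the context, so when \slnc evaluates a subformula such as $\cando{x_0}(a,x_0)\circ p$ it only asks whether $a$ has \emph{some} choice all of whose successors satisfy $p$, ignoring any previously quantified strategy of $a$. Consequently the coherence constraint that tells $\mathcal{N}$ and $\mathcal{N}'$ apart is invisible to \slnc. I would make this rigorous by defining a relation $Z$ pairing configurations of $\mathcal{N}$ with configurations of $\mathcal{N}'$ --- a \slnc assignment over $\mathcal{N}$ related to one over $\mathcal{N}'$ inducing the same outcome set from the current state --- and proving by structural induction on \slnc formulas that $Z$-related configurations satisfy the same formulas; the binder and strategy-quantifier clauses are exactly the ones for which $\mathcal{N}'$ has been tailored, exploiting that each \slnc binder resets the relevant agent.

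The main obstacle is precisely this induction: $\mathcal{N}$ and $\mathcal{N}'$ must be designed concretely enough that the strategy correspondence $Z$ survives arbitrary nesting of strategy quantifiers, binders and temporal operators (in particular the memoryful $\ut$, which shifts the assignment along the current history), while $\mathcal{N}$ stays genuinely non-deterministic and respects the \cg coherence condition so that $\psi_9$ really does hold at $s$. It is tempting to shortcut (ii) by reducing an arbitrary \cg to a deterministic one and then appealing to Proposition~\ref{sec:comparison-with-slnc-1} (over deterministic models $\psi_9$ is identically false), but the coherence condition of Def.~\ref{cgsc} resists naive determinisation --- an agent may not be equipped with two disjoint choices --- so the self-contained two-model construction appears to be the more robust route.
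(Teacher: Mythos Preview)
Your route is entirely different from the paper's. The paper also argues by contradiction, but not via a pair of separating models: it classifies \slnc sentences by quantifier polarity. If a putative $\varphi\equiv\psi_{9}$ were \emph{existential} (every $\cando{x}$ lying under an even number of negations) then over binary trees $\varphi$ would translate into a $\Sigma^{1}_{1}$ sentence; but $\psi_{9}$ is shown equivalent to an infinite scheme $\{\Gamma_{i}\}_{i\in\mathbb N}$ (each $\Gamma_{i}$ saying that $a$ can control $p$ versus $\neg p$ for $i$ rounds), and a compactness argument rules out any single $\Sigma^{1}_{1}$ equivalent. Hence $\varphi$ must contain an essentially universal strategy quantifier --- yet on structures where $a$ can enforce every future $p$-labelling, any universally bound subformula $(a,x)\psi$ forces $\psi$ to be a validity, so $\varphi$ collapses back to an existential sentence: contradiction. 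No concrete pair of models is ever built.

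Your two-model plan is the standard elementary technique and would be welcome if it went through, but as written it carries a genuine gap, not merely missing routine detail. You name the obstacle yourself: you neither construct $\mathcal N,\mathcal N'$ concretely nor prove the \slnc-equivalence that is the crux of the argument. This step is far from automatic. \slnc over \cgp still quantifies over full history-dependent strategies with unbounded alternation of $\cando{x}$, negation, binders for \emph{every} agent, and temporal operators; a back-and-forth relation $Z$ surviving all of this tends to force the two structures to coincide in a sense strong enough that they would then also agree on $\psi_{9}$. Your heuristic (``the \slnc binder resets $a$, so coherence is invisible'') covers the one-quantifier case but says nothing about formulas that interleave binders for several agents --- where \slnc \emph{does} see intersections of choices --- or that repeatedly re-quantify and re-bind $a$ along a history. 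Until you exhibit explicit $\mathcal N,\mathcal N'$ and carry the induction through the full \slnc syntax, this remains a programme rather than a proof; the paper's compactness argument is designed precisely to sidestep that difficulty.
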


\begin{proof}[Proof (Sketch)]
The proof uses a generalization of SL  semantics over \cgp. Its definition is in \cite{tr2} and holds, for example, the following cases:
\begin{itemize}

\item $\mathcal{M}, \mu,\ct,s \models_{\cg} \circ\varphi$ iff for every $\lambda\in \out(s,(\mu, \ct)),
  \mathcal{M}, \mu^{\lambda_{0}},\ct,\lambda_{1}\models_{\cg}
  \varphi$
\item $\mathcal{M}, \mu,\ct,\lambda\models_{\cg} \varphi_{1}\ut\varphi_{2}$ iff
  for every $\lambda\in \out(s,(\mu, \ct))$, there is $i\in \mathbb{N}$
  \st $\mathcal{M}, \mu^{\lambda_{0}\dots\lambda_{i-1}},\ct,\lambda^{i}\linebreak\models_{\cg} \varphi_{2}$ and for
  all $0\leq j\leq i, \mathcal{M}, \mu^{\lambda_{0}\dots\lambda_{i-1}},\ct,\lambda^{j}\models_{\cg}
  \varphi_{1}$.
\item $\mathcal{M}, \mu,\ct,s\models_{\cg} \cando{x}\varphi$ iff
  there is a strategy $\sigma\in \stra$ \st $\mathcal{M}, \mu[x\to
    \sigma],\ct, s\models_{\cg}\varphi$.
\item $\mathcal{M},\mu,\ct , s\models_{\cg} (a,x)\varphi$ iff  $ \mathcal{M}, \mu,  \ct[x\backslash \ct(a)], s\models_{\cg}\varphi$.
\end{itemize}
where $\ct[x\backslash \ct(a)]$ designates the context obtained from $\ct$ by replacing every $(a,y)$ in it by $(a, x)$.

Formula  $\psi_{9}$ states that $a$ can always control whether $p$ or not. 
Suppose there is a formula $\varphi$ in \slnc equivalent to $\psi_{9}$
and let us call \emph{existential} a formula in \slnc in which every occurrence
of $\cando{x}$ is under an even number of quantifiers. If $\varphi$
is existential then under binary trees it is equivalent to a formula in $\Sigma_{1}^{1}$ (the fragment of second order logic with only existential set quantifiers). 

We now consider a set of formulas $\{\Gamma_{i}\}_{i\in \mathbb{N}}$,
each one stating that $a$ can choose $i$ times between $p$ and $\neg
p$. The set $\{\Gamma_{i}\}_{i\in \mathbb{N}}$ is defined by induction
over $i$:
\begin{itemize}
\item $\Gamma_{0}:=
\cando{x}(a,x)\Box(\cando{x_{0}}(a,x_{0})\circ p
\wedge\cando{x_{0}}(a,x_{0})\circ \neg p)$
\item for all $i\in\mathbb{N}, \Gamma{i+1} = \Gamma_{i}[
  p \wedge\Box(\cando{x_{i+1}}(a,x_{i+1})\circ p
  \wedge\cando{x_{i+1}}(a, x_{i+1})\circ \neg p \backslash p]\newline[\neg p
  \wedge\Box(\cando{x_{i+1}}(a,x_{i+1})\circ p \wedge\cando{x_{i+1}}(a,x_{i+1})\circ
  \neg p)\backslash \neg p]$.
\end{itemize}
where the notation $\theta_{1}[\theta_{2}\backslash\theta_{3}]$ designates
the formula obtained from $\theta_{1}$ by replacing any occurrence of
subformula $\theta_{3}$ in it by $\theta_{2}$.  $\{\Gamma_{i}\}_{i\in
  \mathbb{N}}$ is equivalent to $\varphi$. A compactness argument shows
that it is not equivalent to a formula in $\Sigma_{1}^{1}$ under
binary trees, hence $\varphi$ is not an existential formula.  Then, we
notice that $\varphi$ is true in structures where, from any state, $a$ can ensure any
labelling of sequences over $p$. So, if $\varphi$ has a
subformula $(a,x)\psi$ where $x$ is universally quantified, $\psi$
must be equivalent to $\Box(p\vee\neg p)$. Then, by iteration, $\varphi$ is
equivalent to an existential formula in \slnc. Hence a contradiction.
\end{proof}


\section{Conclusion}

In this article we defined a strategy logic with updatable
strategies. By updating a strategy, agents remain playing along it but
add further precision to their choices. This mechanism enables to
express such properties as sustainable capability and sustainable
control. To the best of our knowledge, this is the first proposition
for expressing such properties. Especially, the comparison introduced
with SL in this article could be adapted to ATL with Strategy Context
\cite{brihaye2009atl}.

The revocation of strategies is also questioned in
\cite{agotnes2007}. The authors propose a formalism with definitive
strategies, that completely determine the behaviour of agents.  They
also underline the difference between these strategies and revocable
strategies in the classical sense. We believe that updatable
strategies offer a synthesis between both views: updatable strategies
can be modified without being revoked.

Strategies in \slc can also be explicitly revoked. This idea  is already
present in \cite{brihaye2009atl} with the operator $\cdot\rangle A
\langle\cdot$. But the  operator $\langle\cdot
A \cdot\rangle$ also implicitly unbinds current  strategy for agents in $A$ before binding them a new strategy. Thus it prevents agents from  updating their
strategy or composing several strategies.

Further study perspectives about \slc mainly concern the model
checking. Further work will provide it with a proof of non elementary
decidability, adapted from the proof in \cite{vardi}. We are also
working on a semantics for \slc under memory-less strategies
and \pspace\ algorithm for its model-checking. Satisfiability problem
should also be addressed. Since \slnc SAT problem is not decidable,
similar result is expectable for \slc.  Nevertheless,
decidable fragments of \usl may be studied in the future,
in particular by  following the directions given in \cite{slconcur}.


\bibliographystyle{eptcs}
\bibliography{main}
\end{document}